\numberwithin{equation}{section}
\newcommand{\cev}[1]{\reflectbox{\ensuremath{\vec{\reflectbox{\ensuremath{#1}}}}}}
\newtheorem{defi}{Definition}[section]
\newtheorem{thm}[defi]{Theorem}
\newtheorem{lemm}[defi]{Lemma}
\newtheorem{rem}[defi]{Remark}
\newtheorem{exam}[defi]{Example}
\newtheorem{nota}[defi]{Notation}
\newtheorem{ass}[defi]{Assumption}
\newcommand{\RR}{\mathbb{R}}
\newcommand{\EE}{\mathbb{E}}
\newcommand{\PP}{\mathbb{P}}
\newcommand{\QQ}{\mathbb{Q}}
\newcommand{\R}{\mathbb{R}}
\begin{document}
%\linenumbers

\title[A functional variational approach to pricing path dependent policies]
{A functional variational approach to pricing path dependent insurance policies}
\author[D. Ba\~{n}os]{David Ba\~{n}os}
\address{D. Ba\~{n}os: Department of Mathematics, University of Oslo, Moltke Moes vei 35, P.O. Box 1053 Blindern, 0316 Oslo, Norway.}
\email{davidru@math.uio.no}
\author[S. Ortiz-Latorre]{Salvador Ortiz-Latorre}
\address{S. Ortiz-Latorre: Department of Mathematics, University of Oslo, Moltke Moes vei 35, P.O. Box 1053 Blindern, 0316 Oslo, Norway.}
\email{salvadoo@math.uio.no}
\author[O. Zamora]{Oriol Zamora Font}
\address{O. Zamora: Department of Mathematics, University of Oslo, Moltke Moes vei 35, P.O. Box 1053 Blindern, 0316 Oslo, Norway.}
\email{oriolz@math.uio.no}

%\begin{center}
%This Version : February 4th, 2014
%\end{center}

\begin{abstract}
The main purpose of this work is the derivation of a functional partial differential equation (FPDE) for the calculations of equity-linked insurance policies, where the payment stream may depend on the whole past history of the financial asset. To this end, we employ variational techniques from the theory of functional Itô calculus. 
\end{abstract}

\maketitle

\vskip 0.1in
\textbf{Key words and phrases}: Reserve, stochastic reserve, equity-linked, Thiele's equation, Thiele's PDE, Asian options, functional Itô calculus, functional equations.

\textbf{MSC2010:}  60H30, 91G20, 91G30, 91G60, 35Q91.

\section{Introduction}

In actuarial science, a reserve is a liability equal to the present value of future cash flows that the insurance company promises to pay out to the insured under certain conditions. In easier terms, a reserve is an estimate of \emph{how much} the insurance company should charge today in order to meet future payments owed to the insured. This quantity is the basis to ensure the solvency of the company and the standard way of computing premiums. The literature sometimes distinguishes between the terms \emph{net present value} of future obligations and \emph{actuarial reserve}. The former is the \emph{expected} cost of the insurance itself, while the latter is obtained by subtracting the premiums provided by the insured, which are used to pay back to the insured. In this sense, the paid-in premiums should match, in average, the future obligations in such a way that the reserve at the entry of the contract is null. We point out that practitioners in the industry often use the terminology \emph{reserve} as the upper quantile of the loss distribution for each exercise year, usually $99.5\%$ as stipulated in the Solvency II delegated acts. In this work, however, we focus on the academic definition. 

In life insurance, the main sources of risk of a policy are the state of the insured which triggers the payments, and the future development of the return on the financial investments or technical interest rate. In this note, we focus on the latter and model its risk via a continuous time Itô-diffusion of non-Markovian type. We suggest a way to make cash flows equity dependent and provide some new examples where standard calculus cannot be applied due to the lack of Markovianity in the financial market. This process can be thought of as an investment fund where premiums are deposited, or simply the evolution of some stock or pool of stocks that the insured wishes to invest in. The techniques and results of our work could be adapted for stochastic technical interest rate as well, including stochastic interest rates that are non-Markovian. We choose to present and develop a modelling framework for equity-linked policies because it allows for more generality, since payments can be dependent on market movements.

Life insurance claims in the context of stochastic interest rates have been studied before. We mention \cite{Bacinello93, Bacinello94} for some specific interest rate models and \cite{Bacinello, Kurtz96, Nielsen95} for more general models in the framework of Heath-Jarrow-Morton. A model for the interest rate of diffusion type was considered by Norberg and Møller in \cite{Norberg96} and by the authors in \cite{Banos20} where they look at unit-linked insurances with variance risk, as well.

On the other hand, research on unit-linked policies, that is policies where payments are stochastic due to the influence of market investments, has been widely studied in the Markov setting. For instance, the already mentioned works \cite{Bacinello93, Bacinello94} consider also equity-linked contracts. Their modelling framework for interest rates is the well-known Heath-Jarrow-Morton which is more general than just Markovian diffusions. However, the stock evolution is the classical Black-Scholes model with two independent noises.

One of the seminal works on this topic is due to Brennan and Schwartz in \cite{Brennan76}. Also in \cite{Aase94} Aase and Persson derive a partial differential equation for computing reserves of equity-linked policies in a risk-neutral framework, still on a Markovian setting. The latter work by Aase and Persson is developed further towards a more general framework by Steffensen in \cite{Steffensen00}, where the author uses classical principles of no-arbitrage imported from financial mathematics to price insurance contracts in the context of securitization. In this way, the author finds and provides the right connection between insurance pricing and financial mathematics. If one assumes that insurance contracts can be traded in the market by securitization, then the no-arbitrage approach provides the right way of pricing such instruments and a Thiele's equation is derived. Actually, since the principles from \cite{Steffensen00} are rather model-free and more of philosophical nature, they can also be used, and will be used, tacitly, in our framework as well. Although path dependent payoffs are discussed in \cite{Steffensen00} in form of Asian-type options, the underlying assets are assumed to be Markovian from the very beginning. In this sense, the present work generalizes this property to path dependent processes.

A classical way of computing reserves in the continuous time setting is by solving the so-called Thiele's differential equation. In the case of deterministic interest rates and policy payments, it is an ordinary differential equation, while in the case of equity-linked insurance policies, it is a partial differential equation (PDE). The corresponding Thiele's equation for the case of stochastic rates was derived by Norberg and Møller in \cite{Norberg96} and later risk adjusted by Persson in \cite{Persson98}.

%In the present manuscript we use the \emph{no arbitrage} approach as in \cite{Persson98,Steffensen00} to price insurance claims. A justification of why this is the right pricing approach can be found there.

However, the assumption of non-Markovianity in our underlying stochastic processes calls for more advanced techniques in order to treat functional SDEs. We show that an analogue of Thiele's equation resembling that one in \cite{Steffensen00} can be retrieved by resorting to functional Itô calculus developed by Dupire \cite{Dupire09} and further developed by Cont \cite{Cont16}. The reader may consult \cite{Cont16} for a course on this topic. Using the variational techniques from \cite{Dupire09,Cont16} one can show that insurance contracts that depend on the whole performance of a financial asset can be priced and hedged. The formula we obtain is an expected infinite-dimensional PDE, that is, a PDE where the space variable belongs to some metric space of paths. The derivatives involved in the equation \emph{are not} Fréchet derivatives, but rather some type of directional derivatives that make computations more applicable.
Our framework includes, not only all classical types of insurance contracts such as Guaranteed Minimum Maturity Benefits (GMMB) but also path dependent options such as Asian or lookback options or even more complex path dependent options. Thus, the derived formula provides an alternative method to price and compute reserves for such type of insurance contracts.

%Although the above-mentioned works do consider stochastic rates, their models assume that payments are deterministic given the state of the policyholders stipulated by contract. In this note we want to generalize the type of contracts to those looking at interest rate behaviour both at punctual times and path dependent past averages. In the spirit of \cite{Persson98} we generalize Thiele's partial differential equation (Thiele's PDE) to incorporate interest rate dependent cash flows. The motivation to do so comes from the fact that insurance policies have historically considered fixed returns for their policyholders. This has proven to carry the risk of low returns in the bond market and as a consequence, limitations in order to cover the agreed future liabilities. In this regard, we propose new policies that take into account the behaviour of the returns and how to price them. We also propose a reinsurance treaty which covers a percentage of the pension policies in case of low returns. The reader may consult the works \cite{Aase94} and \cite{Steffensen06} for similar products, Aase and Persson from a unit-link perspective and Steffensen from a with-profit insurance perspective.

The paper is organized as follows: In Section \ref{framework} we introduce our modelling framework. This is a friendly introduction to the functional Itô calculus and the probability spaces carrying the financial and insurance information. In Section \ref{sec:thiele} we present the hypothesis in our model and derive the functional Thiele's equation.

\section{Mathematical framework}\label{framework}

In this section we present the mathematical context in which the main results of the paper lie. The framework can be divided into two main parts: mathematical finance and insurance, and a functional variational calculus due to Dupire \cite{Dupire09}. For the insurance setting, we will mainly adopt the modelling framework of Norberg, see \cite{Norberg91}, in connection with equity-linked insurance products in the same flavour as in \cite{Aase94,Koller12,Kurtz96}.

We will hereunder offer an introduction on the functional Itô calculus and later on we will present our financial and actuarial model.

\subsection{Functional Itô calculus} 

We start by giving a brief résumé of the theory of functional Itô calculus due to Dupire \cite{Dupire09}. The reader may consult \cite{Cont16} for a more detailed introduction to the topic.

The functional Itô calculus is a non-anticipative (as opposed to e.g. Malliavin calculus) stochastic calculus of variations for functionals of the whole path of a (not necessarily continuous) semimartingale. This calculus was introduced by Bruno Dupire in \cite{Dupire09} motivated by financial applications and it overcomes the limitations of the usual Itô calculus which cannot, in general, deal with path dependent options. For example, this calculus is very useful for path dependent options such as Asian options or basket options.

Even though the theory is extensive, we will simply introduce the required mathematical tools for our purposes. Essentially, we will present the necessary definitions and results from \cite{Cont16}, following the notation accordingly.

\begin{nota}
   Let $T>0$ be a fixed time horizon and $D([0,T],\RR)$ be the space of càdlàg functions defined on $[0,T]$ with values in $\RR$. For a càdlàg function $\omega\in D([0,T],\RR)$, we denote by
    \begin{enumerate}
        \item $\omega(t)\in\RR$, the value of $\omega$ at $t\in[0,T]$;
        \item $\omega_t\in D([0,T],\RR)$ the path stopped at $t\in[0,T]$, that is, $\omega_t(s)=\omega(t\wedge s)$ for $s\in[0,T]$;
        \item $\omega(t-)=\lim_{\substack{s\to t \\ s<t}}\omega(s)$, the left limit at $t\in[0,T]$.
    \end{enumerate}
   For a càdlàg stochastic process $X=\{X(t), t\in[0,T]\}$ we similarly denote by
    \begin{enumerate}
        \item $X(t)$, its value at $t\in[0,T]$;
        \item $X_t=\{X(t\wedge s), s\in[0,T]\}$, the process stopped at $t\in[0,T]$;
        \item $X(t-)=\lim_{\substack{s\to t \\ s<t}}X(s)$, the left limit at $t\in[0,T]$. 
    \end{enumerate}
    In addition, we will denote by $C^0([0,T],\RR)$ the space of continuous functions defined on $[0,T]$ with values in $\RR$.
\end{nota}

Let $\Omega=D([0,T],\RR)$, $X$ the canonical process on $\Omega$, that is, $X(t,\omega)=\omega(t)$ for $\omega\in\Omega$, and $\mathcal{F}^X=\{\mathcal{F}^X_t\}_{t\in[0,T]}$ the filtration generated by $X$. 

A process $Y$ adapted to $\mathcal{F}^X$ can be represented by a functional $F$ of the stopped path of $X$ at $t$. Namely,
$$Y(t,\omega)= F(t,X_t)=F(t,\omega_t),$$
where $F:[0,T]\times \Omega\rightarrow \R$. Note that $F(t,\cdot)$ actually only needs to be defined on the set of stopped paths at $t$. A stopped path is an equivalent class in $[0,T]\times \Omega$ with the equivalence relation
$$(t,\omega)\sim (t',\omega') \iff t=t' \mbox{ and } \omega_t=\omega'_{t}.$$
The space of stopped paths can be defined as $\Lambda_T \triangleq [0,T]\times \Omega/\sim$ which we equip with the metric given by
\begin{align*}d_{\infty}((t,\omega),(t',\omega')) &=\sup_{s\in [0,T]}|\omega(t\wedge s)-\omega'(t'\wedge s)|+|t-t'|.\end{align*}
The metric space $(\Lambda_T,d_{\infty})$ is then a complete metric space and the subspace of continuous stopped paths $\mathcal{W}_T=\{(t,\omega)\in\Lambda_T, \omega\in C^0([0,T],\RR)\}$ is a closed subset of it. In summary, we can view adapted processes as functionals on the space of stopped paths $\Lambda_T$, and vice versa.

We now give the mathematical definition of a non-anticipative functional and the notion of continuous and left-continuous non-anticipative functionals. 

\begin{defi}[Non-anticipative functional]
A non-anticipative functional on $\Omega$ is a Borel measurable map $F: (\Lambda_T ,d_\infty)\rightarrow \R$ on the space $(\Lambda_T ,d_\infty)$ of stopped paths.
\end{defi}

Note that for $\omega\in D([0,T],\RR)$ and $F$ a non-anticipative functional, we have that $F(t,\omega)=F(t,\omega_t)$ since $(t,\omega)\sim (t,\omega_t)$. 

\begin{defi}[$\mathbb{C}^{0,0}(\Lambda_T)$ functionals] A continuous non-anticipative functional is a continuous map $F: (\Lambda_T, d_\infty)\rightarrow \RR$. That is, $\forall (t,\omega_t)\in\Lambda_T$, $\forall \epsilon>0$, $\exists\eta>0$, $\forall(t',\omega'_{t'})\in\Lambda_T$
\begin{align*}
    d_\infty((t,\omega),(t',\omega'))<\eta \implies |F(t,\omega_t)-F(t',\omega'_{t'})|<\epsilon.
\end{align*}
    We denote by $\mathbb{C}^{0,0}(\Lambda_T)$ the set of continuous non-anticipative functionals. 
\end{defi}

\begin{defi}[$\mathbb{C}_l^{0,0}(\Lambda_T)$ functionals] A left-continuous non-anticipative functional is a non-anticipative functional such that: $\forall (t,\omega_t)\in\Lambda_T$, $\forall\epsilon>0$, $\exists\eta>0$, $\forall(t',\omega'_{t'})\in\Lambda_T$,
\begin{align*}
    t'<t \text{  and  } d_\infty((t,\omega),(t',\omega'))<\eta \implies |F(t,\omega_t)-F(t',\omega'_{t'})|<\epsilon.
\end{align*}
     We denote by $\mathbb{C}_l^{0,0}(\Lambda_T)$ the set of left-continuous non-anticipative functionals.    
\end{defi}

We introduce the derivatives of the functional Itô calculus, namely, the horizontal and vertical derivative. The horizontal derivative is, informally, the variation of a non-anticipative functional along a horizontal extension of the stopped path $(t,\omega_t)$ to $(t+h,\omega_t)$. On the other hand, the vertical derivative can be seen as a particular case of a Gâteaux derivative along a vertical perturbation from $(t,\omega_t)$ to $(t,\omega_t+h\mathbb{I}_{[t,T]})$.

\begin{defi}[Horizontal and vertical derivatives]
A non-anticipative functional $F : \Lambda_T \rightarrow \R$ is said to be horizontally differentiable at $(t,\omega_t)\in \Lambda_T$ if the limit
$$\mathcal{D}F(t,\omega_t)= \lim_{\substack{h\to 0 \\ h>0}} \frac{F(t+h,\omega_t)-F(t,\omega_t)}{h}$$
exists and is finite. If $F$ is horizontally differentiable at all $(t,\omega_t)\in\Lambda_T$, then $\mathcal{D}F$ is a non-anticipative functional called the horizontal derivative of $F$. 

Further, $F$ is said to be vertically differentiable at $(t,\omega_t)\in \Lambda_T$ if the limit
$$\nabla_\omega F(t,\omega_t)= \lim_{h\to0} \frac{F(t,\omega_t+h\mathbb{I}_{[t,T]})-F(t,\omega_t)}{h}$$
exists and is finite. If $F$ is vertically differentiable at all $(t,\omega_t)\in \Lambda_T$ then $\nabla_\omega F$ is a non-anticipative functional called the vertical derivative of $F$.
\end{defi}

Note that $\nabla_\omega F(t,\omega_t)$ is the directional derivative of $F(t,\cdot)$ in the direction $\mathbb{I}_{[t,T]}$. It is worth mentioning that the notion of vertical derivative is much weaker than the Fréchet derivative which involves perturbations in all directions and not only in the direction $\mathbb{I}_{[t,T]}$.

Observe that if $F(t,\omega_t)=f(t,\omega(t))$ with $f\in C^{1,1}([0,T]\times\R,\R)$ then $\mathcal{D}F(t,\omega_t)= \partial_t f(t,\omega(t))$ and $\nabla_\omega F(t,\omega_t)=\partial_x f(t,\omega(t))$. Therefore, we can view the horizontal and vertical derivative as extensions of the notions of partial derivative in time and space, respectively.

Previous to presenting the functional Itô formula we need to introduce the following space of regular functionals.

\begin{defi}[$\mathbb{B}(\Lambda_T)$ functionals]
Define $\mathbb{B}(\Lambda_T)$ as the set of non-anticipative functionals $F:\Lambda_T \rightarrow \R$ such that for any compact set $K\subset \R$ and $t_0<T$, there exists a finite constant $C_{K,t_0}>0$ such that for all $t\leq t_0$, $\omega\in \Omega$ and $\omega([0,t])\subset K$ implies $F(t,\omega_t)\leq C_{K,t_0}$.
\end{defi}

\begin{defi}[$\mathbb{C}_b^{1,2}(\Lambda_T)$ functionals]
Define $\mathbb{C}_b^{1,2}(\Lambda_T)$ as the set of left-continuous functionals $F\in \mathbb{C}_l^{0,0}(\Lambda_T)$ such that
\begin{itemize}
\item $F$ admits a horizontal derivative $\mathcal{D}F(t,\omega_t)$ for all $(t,\omega_t)\in \Lambda_T$ and $\mathcal{D}F(t,\cdot): D([0,T],\RR)\rightarrow \R$ is continuous for each $t\in [0,T]$ under the uniform topology.
\item $\nabla_\omega F,\nabla_\omega^2 F\in \mathbb{C}_l^{0,0}(\Lambda_T)$.
\item $\mathcal{D}F,\nabla_\omega F,\nabla_\omega^2 F\in \mathbb{B}(\Lambda_T)$.
\end{itemize}
\end{defi}

\begin{defi}[$\mathbb{C}_{\text{loc}}^{1,2}(\Lambda_T)$ functionals] A non-anticipative functional $F\in\mathbb{C}^{0,0}_b(\Lambda_T)$ is said to be locally regular if there exists an increasing sequence of stopping times $\tau_0=0$, $\tau_k\uparrow\infty$, and $F^k\in\mathbb{C}_b^{1,2}(\Lambda_T)$ such that
\begin{align*}
    F(t,\omega_t)=\sum_{k\geq0}F^k(t,\omega_t)\mathbb{I}_{[\tau_k(\omega),\tau_{k+1}(\omega))}(t).
\end{align*}
We denote by $\mathbb{C}_{\text{loc}}^{1,2}(\Lambda_T)$ the set of locally regular non-anticipative functionals.
\end{defi}

Let $Y$ be an $\mathcal{F}^X$-adapted process given by a non-anticipative functional $Y(t)=F(t,X_t)$ where $F\in\mathbb{C}_b^{1,2}(\Lambda_T)$. Assume, as it will be of interest later, that $X$ has continuous paths. Then, $Y$ only depends on the restriction of $F$ to 
\begin{align*}
    \mathcal{W}_T= \{(t,\omega)\in\Lambda_T, \omega\in C^0([0,T],\RR)\}. 
\end{align*}
So, the representation $Y(t)=F(t,X_t)$ may not be unique. Moreover, the computation of the vertical derivative involves the evaluation of $F$ outside $\mathcal{W}_T$ due to adding the jump $h\mathbb{I}_{[t,T]}$ to the path $\omega_t$. However, \cite{Cont16} shows that this does not create any inconsistency in the computation of the vertical derivative. More precisely, Theorems 5.4.1 and 5.4.2 in \cite{Cont16} show that if two functionals agree on continuous paths, then their vertical derivatives are the same. Hence, the choice of $F$ in the representation $Y(t)=F(t,X_t)$ is consistent and we can define the following class of regular functionals. 

\begin{defi}[$\mathbb{C}^{1,2}_b(\mathcal{W}_T)$ and $\mathbb{C}^{1,2}_{loc}(\mathcal{W}_T)$ functionals] 
We say that $F\in\mathbb{C}^{1,2}_b(\mathcal{W}_T)$ if there exists $\widetilde{F}\in\mathbb{C}_b^{1,2}(\Lambda_T)$ such that $\widetilde{F}_{|\mathcal{W}_T}=F$. We define $\mathbb{C}^{1,2}_{loc}(\mathcal{W}_T)$ in a similar way.     
\end{defi}

We introduce the functional Itô formula, which resembles the standard Itô formula but replacing temporal and spatial derivatives with horizontal and vertical derivatives, respectively. Note however that the following Itô formula holds pathwise, in contrast to the classical one.

\begin{thm}[Functional Itô formula for continuous semimartingales]
Assume that $X$ is a continuous semimartingale and $F\in \mathbb{C}_{loc}^{1,2}(\mathcal{W}_T)$. For any $t\in [0,T)$,
\begin{align*}
F(t,X_t) =&\, F(0,X_0)+ \int_0^t \mathcal{D} F(u,X_u)du\\
&+ \int_0^t \nabla_\omega F(u,X_u) dX(u) + \frac{1}{2}\int_0^t \nabla_\omega^2 F(u,X_u)d[X](u),\quad \PP-\mbox{a.s.}
\end{align*}
where $[X]$ stands for the quadratic variation of the (continuous) semimartingale $X$.
\end{thm}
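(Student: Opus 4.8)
The plan is to run the classical discretisation argument of Dupire and Cont--Fourni\'e: after a localisation reducing to a functional in $\mathbb{C}_b^{1,2}(\Lambda_T)$ and a process staying in a compact set, expand $F(t,X_t)-F(0,X_0)$ along a refining sequence of partitions of $[0,t]$ into \emph{horizontal} increments (time advances, the path is frozen) plus \emph{vertical} increments (the frozen path is bumped by an increment of $X$), and identify the limits of the three resulting sums with $\int_0^t\mathcal{D}F\,du$, $\int_0^t\nabla_\omega F\,dX$, and $\tfrac12\int_0^t\nabla_\omega^2 F\,d[X]$. For the localisation, since $F\in\mathbb{C}^{1,2}_{loc}(\mathcal{W}_T)$ I would fix an extension $\widetilde{F}\in\mathbb{C}^{1,2}_{loc}(\Lambda_T)$ with $\widetilde{F}_{|\mathcal{W}_T}=F$, write $\widetilde{F}=\sum_{k\ge0}\widetilde{F}^k\,\mathbb{I}_{[\tau_k,\tau_{k+1})}$ with $\widetilde{F}^k\in\mathbb{C}_b^{1,2}(\Lambda_T)$, and stop $X$ at the $\tau_k$ and at the exit times $\sigma_m=\inf\{s:|X(s)|\ge m\}$; on each resulting stochastic interval $X$ lies in a compact set, so the $\mathbb{B}(\Lambda_T)$ property makes every bound below a genuine finite constant, and since the $\tau_k,\sigma_m$ exhaust $[0,T]$ a.s.\ as $k,m\to\infty$ while by Theorems~5.4.1--5.4.2 of \cite{Cont16} the horizontal and vertical derivatives are independent of the chosen extension, it suffices to treat $F\in\mathbb{C}_b^{1,2}(\Lambda_T)$ with $X$ a bounded continuous semimartingale.

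Next, fixing $t\in[0,T)$ and partitions $\pi^n=\{0=t^n_0<\dots<t^n_{k_n}=t\}$ with $|\pi^n|\to0$, I would introduce the left-sampled step path $X^n$ given by $X^n(s)=X(t^n_i)$ for $s\in[t^n_i,t^n_{i+1})$ and $X^n(t)=X(t)$, so that $\sup_{s\le t}|X^n(s)-X(s)|\to0$ a.s.\ by continuity of $X$ and, as stopped paths, $X^n_{t^n_{i+1}}=X^n_{t^n_i}+\delta_i\,\mathbb{I}_{[t^n_{i+1},T]}$ with $\delta_i:=X(t^n_{i+1})-X(t^n_i)$. Telescoping along $\pi^n$ and inserting the intermediate point $(t^n_{i+1},X^n_{t^n_i})$ at each step gives
\[F(t,X^n_t)-F(0,X_0)=\sum_{i=0}^{k_n-1}\Big(F(t^n_{i+1},X^n_{t^n_i})-F(t^n_i,X^n_{t^n_i})\Big)+\sum_{i=0}^{k_n-1}\Big(F(t^n_{i+1},X^n_{t^n_i}+\delta_i\,\mathbb{I}_{[t^n_{i+1},T]})-F(t^n_{i+1},X^n_{t^n_i})\Big),\]
the first sum collecting the horizontal increments and the second the vertical ones.

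For the horizontal sum, $s\mapsto F(s,X^n_{t^n_i})$ is continuous (left-continuity of $F$ together with right-differentiability) with right-derivative $\mathcal{D}F(s,X^n_{t^n_i})$ at every $s$, the latter locally bounded, so the fundamental theorem of calculus turns it into $\sum_i\int_{t^n_i}^{t^n_{i+1}}\mathcal{D}F(s,X^n_{t^n_i})\,ds$, which converges to $\int_0^t\mathcal{D}F(s,X_s)\,ds$ using the uniform-topology continuity of $\mathcal{D}F(s,\cdot)$, the convergence $X^n\to X$, and dominated convergence. For the vertical sum, a second-order Taylor expansion of $u\mapsto F(t^n_{i+1},X^n_{t^n_i}+u\,\mathbb{I}_{[t^n_{i+1},T]})$ --- valid since $F$ and $\nabla_\omega F$ are vertically differentiable with successive vertical derivatives $\nabla_\omega F$ and $\nabla_\omega^2 F$ --- gives
\[F(t^n_{i+1},X^n_{t^n_i}+\delta_i\,\mathbb{I}_{[t^n_{i+1},T]})-F(t^n_{i+1},X^n_{t^n_i})=\nabla_\omega F(t^n_{i+1},X^n_{t^n_i})\,\delta_i+\tfrac12\nabla_\omega^2 F(t^n_{i+1},X^n_{t^n_i})\,\delta_i^2+r^n_i,\]
with $|r^n_i|\le\tfrac12\delta_i^2\sup_{|u|\le|\delta_i|}\big|\nabla_\omega^2 F(t^n_{i+1},X^n_{t^n_i}+u\,\mathbb{I}_{[t^n_{i+1},T]})-\nabla_\omega^2 F(t^n_{i+1},X^n_{t^n_i})\big|$. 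Summing: the first-order terms form a non-anticipative Riemann sum converging in probability to $\int_0^t\nabla_\omega F(s,X_s)\,dX(s)$ by the standard approximation theorem for stochastic integrals against a continuous semimartingale (left-continuity and local boundedness of $\nabla_\omega F$, plus $X^n\to X$); the second-order terms converge to $\tfrac12\int_0^t\nabla_\omega^2 F(s,X_s)\,d[X](s)$, since $\sum_{t^n_i\le s}\delta_i^2\to[X]_s$ uniformly in $s\in[0,t]$ in probability while the left-continuous, locally bounded integrand converges uniformly; and $\sum_i r^n_i\to0$ in probability, because $\max_i|\delta_i|\to0$ a.s.\ kills the modulus-of-continuity factor while $\sum_i\delta_i^2$ stays bounded. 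Passing to a subsequence along which all these convergences are almost sure, using $F(t,X^n_t)\to F(t,X_t)$, and undoing the localisation then yields the claimed identity.

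The hard part, as is typical in this circle of ideas, will be threefold: (i) the fundamental theorem of calculus for the horizontal increments, where $\mathcal{D}F$ is only a \emph{right} derivative --- this is exactly where the left-continuity of $F$ and the local boundedness of $\mathcal{D}F$ are used; (ii) the uniform control of the Taylor remainders $\sum_i r^n_i$, which relies on the \emph{continuity} of $X$ (so that $\max_i|\delta_i|\to0$) together with the boundedness of $\sum_i\delta_i^2$; and (iii) the fact that every path appearing in the expansion is a discontinuous step path even though $F$ is only assumed regular on continuous paths --- precisely the consistency issue resolved by Theorems~5.4.1--5.4.2 of \cite{Cont16} quoted above, and the reason the extension $\widetilde{F}$ is fixed at the outset.
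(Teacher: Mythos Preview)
The paper does not prove this theorem: it is stated in the preliminaries as a known result and attributed to \cite{Cont16}; there is no accompanying argument in the paper to compare against. Your sketch is the standard Dupire/Cont--Fourni\'e discretisation proof (localise, telescope along horizontal and vertical increments of a step approximation, identify the three limits), which is exactly the argument in the reference the paper cites, so in that sense your approach coincides with what the authors implicitly rely on.
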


We give now sufficient conditions for existence and uniqueness of strong solutions of path dependent SDEs. Note the similarities with the Lipschitz and linear growth conditions for the existence and uniqueness of strong solutions of standard SDEs.

\begin{thm}[Existence and uniqueness of path dependent SDEs]\label{thm:E&U}
Let $t_0\in[0,T]$, $b,\sigma\in\mathbb{C}_l^{0,0}(\Lambda_T)$ satisfying the following Lipschitz and linear growth conditions
\begin{align}\label{eq:pdpde1}|b(t,\omega_t)-b(t,\omega'_{t})|+|\sigma(t,\omega_t)-\sigma(t,\omega'_{t})|\leq K \sup_{s\leq t}|\omega(s)-\omega'(s)|,\end{align}
and
\begin{align}\label{eq:pdpde2}|b(t,\omega_t)|+|\sigma(t,\omega_t)|\leq K\left(1+ \sup_{s\leq t}|\omega(s)|+|t|\right),\end{align}
for all  $\omega,\omega'\in C^0([0,t],\R)$, all $t\geq t_0$ and let $W$ be a standard Brownian motion. Then, for any $\xi\in C^0([0,T],\R)$ the path dependent SDE
\begin{align}\label{eq:pdsde1}
    dX(t)=b(t,X_t)dt+\sigma(t,X_t)dW(t), \quad X_{t_0}=\xi_{t_0},
\end{align}
has a unique strong solution. The paths of $X$ lie in $C^0([0,T],\R)$ and there exists a constant $C$ depending only on $T$ and $K$ such that, for $t\in[t_0,T]$,
\begin{align*}
    \EE\left[\sup_{s\in[0,t]}X(s)^2\right]\leq C\left(1+\sup_{s\in[0,t_0]}\xi(s)^2\right)e^{C(t-t_0)}.
\end{align*}
\end{thm}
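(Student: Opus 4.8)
The plan is to adapt the classical Picard--Lindel\"of iteration scheme to the path dependent setting, using the hypotheses \eqref{eq:pdpde1}--\eqref{eq:pdpde2} in place of the usual Lipschitz and linear growth bounds; the argument is essentially the same as for ordinary SDEs, with the state variable replaced by the stopped path. Assume first $t_0=0$ (in general one works on $[t_0,T]$ with initial value $\xi(t_0)$ and sets $X_t=\xi_t$ for $t<t_0$, keeping track of the constants so as to recover the stated dependence on $\sup_{s\le t_0}\xi(s)^2$). Set $X^{(0)}(t)\equiv \xi(0)$ and define recursively
\[
X^{(n+1)}(t)=\xi(0)+\int_0^t b(s,X^{(n)}_s)\,ds+\int_0^t \sigma(s,X^{(n)}_s)\,dW(s),\qquad t\in[0,T].
\]
First I would check by induction that each $X^{(n)}$ is a well-defined, continuous, $\mathcal{F}^W$-adapted process with $\EE[\sup_{s\le T}X^{(n)}(s)^2]<\infty$: since $b,\sigma\in\mathbb{C}_l^{0,0}(\Lambda_T)$, the maps $s\mapsto b(s,X^{(n)}_s)$ and $s\mapsto\sigma(s,X^{(n)}_s)$ are progressively measurable, the linear growth bound \eqref{eq:pdpde2} together with the induction hypothesis yields square integrability, and the Lebesgue and It\^o integrals of such integrands admit continuous modifications, so the new iterate again has paths in $C^0([0,T],\R)$, i.e.\ it lies in the set on which \eqref{eq:pdpde1}--\eqref{eq:pdpde2} are assumed to hold.

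Next I would run the usual contraction estimate. Writing $\Delta_n(t)=\EE\bigl[\sup_{s\le t}|X^{(n+1)}(s)-X^{(n)}(s)|^2\bigr]$, I bound the drift increment by Cauchy--Schwarz and the diffusion increment by Doob's $L^2$-maximal inequality and the It\^o isometry, then apply \eqref{eq:pdpde1}, to get
\[
\Delta_n(t)\le C_{T,K}\int_0^t \EE\Bigl[\sup_{u\le s}|X^{(n)}(u)-X^{(n-1)}(u)|^2\Bigr]ds=C_{T,K}\int_0^t \Delta_{n-1}(s)\,ds,
\]
together with an a priori bound $\Delta_0(t)\le C(1+\xi(0)^2)(t+t^2)$ from \eqref{eq:pdpde2}. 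Iterating gives $\Delta_n(T)\le C'(C_{T,K}T)^n/n!$, so $\sum_n \Delta_n(T)^{1/2}<\infty$ and $(X^{(n)})$ is Cauchy in the Banach space of continuous adapted processes normed by $(\EE[\sup_{[0,T]}|\cdot|^2])^{1/2}$; hence it converges, uniformly on $[0,T]$ and a.s.\ along a subsequence, to a limit $X$ with continuous paths. To see $X$ solves \eqref{eq:pdsde1}, I pass to the limit in the recursion: left-continuity of $b,\sigma$ in $d_\infty$ together with the a.s.\ uniform convergence $X^{(n)}_s\to X_s$ (hence $d_\infty$-convergence of the stopped paths) gives $b(s,X^{(n)}_s)\to b(s,X_s)$ and likewise for $\sigma$, and \eqref{eq:pdpde1} provides the domination needed for $L^2$-convergence of the Lebesgue and stochastic integrals, identifying the limit.

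For uniqueness, if $X$ and $\widetilde X$ are two solutions with continuous paths and the same initial condition, the very same estimate applied to their difference yields $g(t):=\EE[\sup_{s\le t}|X(s)-\widetilde X(s)|^2]\le C_{T,K}\int_0^t g(s)\,ds$ (first along a localizing sequence of stopping times to guarantee $g<\infty$, the localization being removed afterwards via the moment bound), whence Gr\"onwall's lemma forces $g\equiv 0$. Finally, the moment estimate follows by applying the functional It\^o formula to $X(\cdot)^2$ (or the classical It\^o formula, $X$ being a continuous semimartingale), taking suprema and expectations, controlling the martingale part by Doob's inequality and the It\^o isometry and the drift by Cauchy--Schwarz, using \eqref{eq:pdpde2} on $b$ and $\sigma$, and closing with Gr\"onwall; tracking the constants produces the bound with $C=C(T,K)$.

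The only place where the path dependent setting differs substantively from the classical one is the bookkeeping in the first step: one must ensure that composing the non-anticipative functionals $b,\sigma$ with the stopped paths of the iterates yields progressively measurable, square integrable integrands, and that the iterates remain \emph{continuous} so that \eqref{eq:pdpde1}--\eqref{eq:pdpde2} — stated only for $\omega,\omega'\in C^0([0,t],\R)$ — apply at every stage. Once this is settled, the contraction and Gr\"onwall machinery is entirely routine.
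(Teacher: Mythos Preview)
The paper does not actually prove this theorem: it is stated without proof in Section~\ref{framework} as part of the review of functional It\^o calculus imported from \cite{Cont16}. Your Picard iteration argument is the standard route to such results and is correct; it is presumably what one finds in \cite{Cont16} as well, so there is nothing to compare.
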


In the connection between a diffusion and its Kolmogorov equation, the domain of the PDE is related to the support of a random variable. For the analogous case in functional Itô calculus, we need to introduce the concept of topological support of a (continuous) stochastic process. 

\begin{defi}[Topological support of a stochastic process] Let $Y=\{Y(t), t\in[0,T]\}$ be a continuous stochastic process. We define the topological support of $Y$ in $(C^0([0,T],\RR),\|.\|_\infty)$ as the following set of continuous paths
\begin{align*}
    supp(Y)=\{\omega\in C^0([0,T],\RR) \ | \ \forall \text{ Borel neighborhood } V \text{ of } \omega, \ \PP(Y_T\in V)>0\}.
\end{align*}
\end{defi}

Finally, we introduce a result that will be useful for deriving a path dependent Thiele's equation. Essentially, it establishes a connection between local martingales and solutions of certain path dependent PDEs. 

\begin{lemm}\label{lem:locmar} Let $b,\sigma\in\mathbb{C}_l^{0,0}(\Lambda_T)$ satisfying the conditions in \eqref{eq:pdpde1} and \eqref{eq:pdpde2} and let $X$ be the unique strong solution of the path dependent SDE \eqref{eq:pdsde1}. If $F\in\mathbb{C}_b^{1,2}(\mathcal{W}_T)$ and $\mathcal{D}F\in\mathbb{C}_l^{0,0}(\mathcal{W}_T)$, then $Y(t)=F(t,X_t)$ is a local martingale if and only if $F$ satisfies
\begin{align*}
    \mathcal{D}F(t,\omega_t)+b(t,\omega_t)\nabla_\omega F(t,\omega_t)+\frac{1}{2}\sigma(t,\omega_t)^2\nabla_\omega^2F(t,\omega_t)=0
\end{align*}
on the topological support of $X$ in $(C^0([0,T],\RR), \|.\|_\infty)$.
\end{lemm}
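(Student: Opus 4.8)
The plan is to apply the functional Itô formula to $F$ along $X$, read off the semimartingale decomposition of $Y(t)=F(t,X_t)$, and then invoke uniqueness of that decomposition in each of the two implications. First I would observe that $F\in\mathbb{C}_b^{1,2}(\mathcal{W}_T)$ is in particular locally regular (with the trivial localising sequence), so the functional Itô formula stated above applies to $Y$; plugging in $dX(u)=b(u,X_u)\,du+\sigma(u,X_u)\,dW(u)$ and $d[X](u)=\sigma(u,X_u)^2\,du$ yields, $\PP$-a.s. and for every $t\in[0,T)$,
\begin{align*}
Y(t)=F(0,X_0)+\int_0^t G(u,X_u)\,du+\int_0^t \nabla_\omega F(u,X_u)\,\sigma(u,X_u)\,dW(u),
\end{align*}
where $G:=\mathcal{D}F+b\,\nabla_\omega F+\frac{1}{2}\,\sigma^2\,\nabla_\omega^2 F$ is a non-anticipative functional on $\mathcal{W}_T$. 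Since $\mathcal{D}F,\nabla_\omega F,\nabla_\omega^2F\in\mathbb{B}(\Lambda_T)$ and $b,\sigma$ satisfy the linear growth bound \eqref{eq:pdpde2}, both integrands are bounded on each stochastic interval $[0,\tau_n]$, $\tau_n:=\inf\{t:|X(t)|\geq n\}\wedge T$, so the first integral has locally finite variation and the second is a local martingale.

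For the \emph{if} direction, I would start from the hypothesis that $G=0$ on $\supp(X)$. By the standard property that the topological support carries the full law of $X$, $X_T\in\supp(X)$ $\PP$-a.s., hence $G(u,X_u)=0$ for all $u\in[0,T]$, $\PP$-a.s.; the $du$-term above then disappears and $Y(t)=F(0,X_0)+\int_0^t\nabla_\omega F(u,X_u)\sigma(u,X_u)\,dW(u)$ is a local martingale.

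For the \emph{only if} direction, assume $Y$ is a local martingale. Then $\int_0^\cdot G(u,X_u)\,du=Y-F(0,X_0)-\int_0^\cdot\nabla_\omega F(u,X_u)\sigma(u,X_u)\,dW(u)$ is simultaneously a continuous local martingale and a continuous process of locally finite variation starting at $0$, hence indistinguishable from $0$; differentiating in $t$ gives $G(u,X_u)=0$ for Lebesgue-a.e. $u$, $\PP$-a.s. To upgrade this to $G(t,\omega_t)=0$ for every $(t,\omega)$ with $\omega\in\supp(X)$ I would use the extra regularity hypotheses: $\mathcal{D}F\in\mathbb{C}_l^{0,0}(\mathcal{W}_T)$ together with $\nabla_\omega F,\nabla_\omega^2F,b,\sigma\in\mathbb{C}_l^{0,0}$ makes $G\in\mathbb{C}_l^{0,0}(\mathcal{W}_T)$, so $u\mapsto G(u,X_u)$ is left-continuous (because $t'\uparrow t$ forces $d_\infty((t',X),(t,X))\to0$ by uniform continuity of the path $X$); a left-continuous function that vanishes a.e. vanishes everywhere, so $G(u,X_u)=0$ for all $u\in(0,T]$, $\PP$-a.s. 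Finally, arguing by contradiction, if $G(t_0,\omega^0_{t_0})\neq0$ for some $\omega^0\in\supp(X)$ and $t_0\in(0,T]$, left-continuity of $G$ produces $\eta>0$ with $|G(t',\omega'_{t'})|>\frac{1}{2}|G(t_0,\omega^0_{t_0})|$ whenever $t'<t_0$ and $d_\infty((t',\omega'),(t_0,\omega^0))<\eta$; choosing $t'$ in a small left-neighbourhood of $t_0$, using uniform continuity of $\omega^0$, and invoking $\PP(\|X-\omega^0\|_\infty<\eta/4)>0$ from the definition of the support, one exhibits a set of positive probability on which $G(t',X_{t'})\neq0$ for all $t'$ near $t_0$ from the left, contradicting the previous step. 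The slice $t_0=0$ reduces to the single point $\omega^0(0)=\xi(0)$ and can be treated separately (or is immaterial for the intended use).

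The step I expect to be the main obstacle is this last transfer from an ``almost everywhere along the trajectories of $X$'' statement to a genuinely pointwise statement on the whole topological support: it is exactly where the hypothesis $\mathcal{D}F\in\mathbb{C}_l^{0,0}(\mathcal{W}_T)$---stronger than the slice-wise continuity already built into $\mathbb{C}_b^{1,2}$---is used, and care is needed because one only has left- (not two-sided) continuity, which forces all approximations to come from earlier times and leaves the time-zero slice to be dealt with apart.
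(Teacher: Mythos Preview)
The paper does not actually give its own proof of this lemma: it is stated in Section~2.1 as one of the background results imported from \cite{Cont16} (the section opens by saying ``we will present the necessary definitions and results from \cite{Cont16}''), and immediately after the statement the text moves on to the financial model without any argument. So there is no proof in the paper to compare against.

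That said, your sketch is exactly the standard route used in the reference itself: apply the functional It\^o formula to obtain the semimartingale decomposition of $Y$, identify the drift as $G=\mathcal{D}F+b\,\nabla_\omega F+\tfrac12\sigma^2\nabla_\omega^2F$, and then use uniqueness of the Doob--Meyer decomposition in each direction. Your handling of the delicate ``only if'' implication---first upgrading $\int_0^t G(u,X_u)\,du\equiv 0$ to $G(u,X_u)=0$ for every $u$ via left-continuity of $u\mapsto G(u,X_u)$, and then transferring this to every point of $\supp(X)$ by a support/neighbourhood contradiction argument---is correct and is precisely where the extra hypothesis $\mathcal{D}F\in\mathbb{C}_l^{0,0}(\mathcal{W}_T)$ is consumed, as you note. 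The one place to tighten is the claim that $G\in\mathbb{C}_l^{0,0}$: you need local boundedness to conclude that products of left-continuous functionals remain left-continuous, which you have from $\mathcal{D}F,\nabla_\omega F,\nabla_\omega^2F\in\mathbb{B}(\Lambda_T)$ together with the linear growth of $b,\sigma$, but it is worth making that explicit.
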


After introducing the necessary results on functional Itô calculus, we now define the framework that will be used throughout the paper. We start introducing the financial model and then the insurance one.

\subsection{The financial model}\label{sec:fin}
Let $T\in \mathbb{R}$, $T>0$ be a time horizon, typically the maturity of a contract. On a complete probability space $(\Omega,\mathcal{A},\PP)$ we consider a standard Brownian motion $W$ and non-anticipative functionals $\widetilde{b},\widetilde{\sigma}$ such that the functionals $b(t,\omega_t)=\omega(t)\widetilde{b}(t,\omega_t)$ and $\sigma(t,\omega_t)=\omega(t)\widetilde{\sigma}(t,\omega_t)$ satisfy $b,\sigma\in\mathbb{C}_l^{0,0}(\Lambda_T)$ and the conditions given in \eqref{eq:pdpde1} and \eqref{eq:pdpde2} of Theorem \ref{thm:E&U}. Then, we consider $S=\{S(t),t\in [0,T]\}$ the unique strong solution of the following \emph{path dependent} SDE,
\begin{align*}
\frac{dS(t)}{S(t)} = \widetilde{b}(t,S_t)dt + \widetilde{\sigma}(t,S_t)dW(t),\quad t\in [0,T],\quad S(0)=x\in \R,
\end{align*}
as the value of a risky asset or fund. Note that the previous path dependent SDE can be simply written as
\begin{align*}
    dS(t)=b(t,S_t)dt+\sigma(t,S_t)dW(t).
\end{align*}

\begin{ass}
    We assume that $S$ and $W$ generate the same filtrations. Note that the family of $\widetilde{b}$ and $\widetilde{\sigma}$ that satisfy this property is quite rich. One simple example is a strictly positive constant volatility and a function of the integral of $S$ as a drift.
\end{ass}

In addition, we consider a technical discount factor $v=\{v(t),t\in [0,T]\}$ modelling the company's current value of one monetary unit to be exercised at time $t$, given by $v(t)=e^{-\int_0^t r(s)ds}$ for a technical rate of return $r$. We hereby assume that $r$ is non-negative, deterministic and càdlàg. The assumption of a deterministic $v$ could be relaxed, but the mathematics become more involved.

\begin{ass}
    We assume that this model is arbitrage-free and complete. Therefore, there exists a unique equivalent martingale measure $\QQ$. For instance, one sufficient condition could be that the market price of risk $\theta(t)=\frac{\widetilde{b}(t,S_t)-r(t)}{\widetilde{\sigma}(t,S_t)}$ satisfies the Novikov condition. 
\end{ass}
Note that path dependent SDE of $S$ under $\QQ$ is given by
\begin{align*}
    \frac{dS(t)}{S(t)}=r(t)dt+\widetilde{\sigma}(t,S_t)dW^\QQ(t),
\end{align*}
where $W^\QQ$ is a $\QQ$-Brownian motion. 

\subsection{The insurance model}
On the same complete probability space $(\Omega,\mathcal{A},\PP)$ introduced in Section \ref{sec:fin}, we describe the states of a policyholder by a càdlàg pure jump process $Z=\{Z(t),t\in[0,T]\}$ taking values in a finite space $\mathcal{Z}=\{1,\dots,N\}$, where $N$ is the number of states. We assume that $Z$ is a regular continuous time Markov process and that the starting state $Z(0)\in\mathcal{Z}$ is deterministic. In addition, we assume that $Z$ and $S$ are independent and that $\mathcal{A}=\mathcal{F}_T$ where $\mathcal{F}=\{\mathcal{F}_t\}_{t\in [0,T]}$ is the filtration generated by $Z$ an $S$ satisfying the usual conditions. Whenever we wish to emphasize the marginal filtration, we will denote $\mathcal{F}^Z$ or $\mathcal{F}^S$, respectively.

The transition probabilities of $Z$ are defined for $0\leq s\leq t\leq T$ by 
\begin{align*}
    p_{ij}(s,t)=\PP[Z(t)=j | Z(s)=i],\quad  i,j\in\mathcal{Z},
\end{align*}
and, since $Z$ is a regular Markov process, the transition rates
\begin{align*}
    \mu_{ij}(t) &= \lim_{\substack{h\to 0 \\ h>0}}\frac{p_{ij}(t,t+h)}{h},\quad  i,j\in\mathcal{Z},\quad  i\neq j, \\
    \mu_i(t) &= \lim_{\substack{h\to 0 \\ h>0}}\frac{1-p_{ii}(t,t+h)}{h},\quad  i\in\mathcal{Z},
\end{align*}
exist for every $t\in[0,T]$, are finite and continuous. We define the intrinsic $\mathcal{F}$-adapted processes $\{I_i\}_{i\in \mathcal{Z}}$ and $\{N_{ij}\}_{i,j\in\mathcal{Z}, i\neq j}$ for every $t\in [0,T]$ by
\begin{align*}
I_i(t) &\triangleq \mathbb{I}_{\{Z(t)=i\}},\quad i\in \mathcal{Z},\\
N_{ij}(t) &\triangleq \#\{s\in [0,t]: Z(s-)=i, Z(s)=j\}, \quad i,j\in\mathcal{Z},\quad  i\neq j,
\end{align*}
where $\#$ denotes the counting measure. The random variable $I_i(t)$ tells us whether the insured is in state $i$ at time $t$ or not and the random variable $N_{ij}(t)$ tells us the number of transitions from $i$ to $j$ by time $t$.

The contractual payments between the two parties, insurer and insured, will be modelled by a stochastic process $C$ whose dynamics are given by
\begin{align}\label{eq:cash}
dC(s)= \sum_{i\in\mathcal{Z}} I_i(s) dc_i(s) +\sum_{\substack{i,j\in \mathcal{Z}\\ j\neq i}} c_{ij}(s) dN_{ij}(s),\quad C(0)\in \R,\quad  s\in [0,T],
\end{align}
where $c_i$ are $\mathcal{F}^S$-adapted càdlàg processes with bounded variation and $c_{ij}$ are $\mathcal{F}^S$-predictable processes. Note that $C$ is an $\mathcal{F}$-adapted semimartingale. We assume that $dC(t)$ is null for $t>T$, meaning that the process $C$ stagnates at time $T$.

Typically, $c_i$ model accumulated payments for sojourns in state $i$ and $c_{ij}$ are lump sum benefits for transitions between $i$ and $j$. Moreover, we use the actuarial convention that premiums take negative sign and benefits positive sign.

We define the \emph{present value} of the (stochastic) cash flow $C$ by
$$V(t,C)=\frac{1}{v(t)}\int_{[0,\infty)} v(s)dC(s),\quad t\geq 0.$$
The integral is taken in the Riemann-Stieltjes sense. Note that the integral is well-defined since $C$ is of bounded variation and $v$ is continuous. Moreover, the integration region is actually $[0,T]$ since $dC(s)=0$ for $s>T$.

Furthermore, we introduce the \emph{retrospective} and \emph{prospective} values of $C$ as
\begin{align*}
\cev{V}(t,C)&=\frac{1}{v(t)}\int_{(0,t]} v(s)dC(s),\quad t\geq 0,\\
\vec{V}(t,C)&=\frac{1}{v(t)}\int_{(t,\infty)} v(s)dC(s),\quad t\geq 0.
\end{align*}

It is a trivial fact that
$$V(t,C)=\cev{V}(t,C)+\vec{V}(t,C)$$
for all $t\in [0,T]$ and that
$$V(0,C)=\Delta C(0)+\vec{V}(0,C).$$

Note that if $C$ is $\PP$-a.s. almost everywhere differentiable with finite number of jumps one has the following formula for the present value,
$$V(t,C)=\frac{1}{v(t)} \int_0^\infty v(s)C'(s)ds+\frac{1}{v(t)} \sum_{0\leq s<\infty}v(s)\Delta C(s),$$
which is useful for computational reasons.

We might drop the dependence on $C$ and simply write $V(t)$, $\cev{V}(t)$ and $\vec{V}(t)$ when convenient.

The process $V$ describes the evolution of the value at each time of the entire cash flow $C$, while the process $\cev{V}$ describes the evolution of the \emph{realized} cash flow and $\vec{V}$ describes the evolution of the \emph{future} cash flow. For instance, at a given time $t$, $\cev{V}(t)$ is the $t$-value of the contractual payments that have taken place until $t$ (including $t$) and $\vec{V}(t)$ is the $t$-value of the promised contractual payments from $t$, not including $t$, to the maturity of the contract.

Given the information $\mathcal{F}_t$ at time $t$, the actuary wishes to evaluate the risk of the remaining payments described by the cash flow $C$, whose value is given by $\vec{V}(t,C)$. A typical choice is to compute the so-called net reserve or expected prospective value which is given by
$$\vec{V}_{\mathcal{F}_t}(t,C)\triangleq \mathbb{E}[\vec{V}(t,C)|\mathcal{F}_t], \quad t\in [0,T].$$
Using that $Z$ enjoys the Markov property and that $Z$ and $S$ are independent, we have the following convenient result
\begin{align*}
    \vec{V}_{\mathcal{F}_t}(t,C)=\EE[\vec{V}(t,C)|\mathcal{F}_t]=\EE[\vec{V}(t,C)|\sigma(Z(t))\vee\mathcal{F}_t^S]=H(t,Z(t),S_t),
\end{align*}
where $H$ is a deterministic function. Then, $\vec{V}_{\mathcal{F}_t}(t,C)= \sum_{i\in \mathcal{Z}} I_i(t) H(t,i,S_t)$ and we will write $\vec{V}_i(t,S_t)\triangleq H(t,i,S_t)$ omitting the dependence on $C$. It is noteworthy that the net reserve $\vec{V}_i(t,S_t)$ is a non-anticipative functional of the risky asset $S$. The actuary may pick any possible non-anticipative performance of the asset $S$, say $\omega$, and the value $\vec{V}_i(t,\omega_t)$ would account for the net reserve assuming that the insured is in state $i$ at time $t$ and the market has had the outcome $\omega_t$ by time $t$. Remark that $\vec{V}$, given a state, is a function of time and path, hence an infinite-dimensional argument is needed, in contrast to the classical case. 

\section{A path dependent Thiele's equation}\label{sec:thiele}

In what follows, we want to derive a path dependent differential equation for the net reserve $\vec{V}_{\mathcal{F}_t}$, $t\in [0,T]$. We will study the case when the payments described by $c_i$ and $c_{ij}$ depend on the performance of a fund or risky asset modelled by $S$ in a functional way. 

In the ensuing paragraphs, we will assume the following structure on the processes $c_i$ and $c_{ij}$ appearing in the dynamics of the cash flow $C$ in \eqref{eq:cash}.

\begin{ass}
We assume the following properties for the processes $c_i$ and $c_{ij}$.

\begin{enumerate}\label{hyp1}
\item(Functional assumptions): The càdlàg processes $c_i$ are assumed to be a.e. differentiable with a.e. derivative $c_i'$ and possess a finite number of jumps of finite size at fixed deterministic times $t_0<\cdots <t_n$, where $t_0=0$ and $t_n=T$, conventionally.

Moreover, we assume that there are non-anticipative functionals $f_i$, $g_i$ and $h_{ij}$ such that the jump sizes $\Delta c_i$, the sojourn payments $c_i'$ and the transition payments $c_{ij}$ take the following form
$$\Delta c_i(t_k)= f_i(t_k,S_{t_k}),\quad c_i'(s)=g_i(s,S_s), \quad c_{ij}(s)= h_{ij}(s,S_s)$$
for every $k=0,\dots,n$, $i,j\in \mathcal{Z}$, $i\neq j$ and $s\in [0,T]$.

\item(Finite expectation): The non-anticipative functionals $f_i$, $g_i$ and $h_{ij}$ satisfy
\begin{align*}
    \EE^\QQ\left[|f_i(t,S_t)|\right]<\infty, \quad\EE^\QQ\left[|g_i(t,S_t)|\right]<\infty, \quad\EE^\QQ\left[|h_{ij}(t,S_t)|\right]<\infty,
\end{align*}
for all $t\in[0,T]$, $i,j\in\mathcal{Z}$, $i\neq j$ and $s\in [0,T]$. In addition, 
\begin{align*}
    \int_t^T v(s)p_{ij}(t,s)\EE^\QQ[g_j(s,S_s)|\mathcal{F}_t]ds<\infty, \quad \text{a.s.}\\
    \int_t^T v(s)p_{ij}(t,s)\mu_{jk}(s)\EE^\QQ[h_{jk}(s,S_s)|\mathcal{F}_t]ds<\infty, \quad \text{a.s.},
\end{align*}
for all $t\in[0,T]$, $i,j,k\in\mathcal{Z}$, $j\neq k$.
\end{enumerate}
\end{ass}

\begin{rem}
Hypothesis \ref{hyp1} suggests a representation of the cash flow $C$ that preserves adaptedness to the filtration $\mathcal{F}$ and that allows for functional dependence on the financial asset. For instance, hedging with Asian-type or look-back options is allowed. The functional dependence on the path $S_t$ is general, as long as $f_i$, $g_i$ and $h_{ij}$ are non-anticipative functionals. This also includes the classical unit-linked policies if one takes as $f_i$, $g_i$ and $h_{ij}$ the evaluations of $S$ at $t$.
\end{rem}

Under Hypothesis \ref{hyp1}, the cash flow has the following differential representation:

$$dC(s) = \sum_{i\in \mathcal{Z}} I_i(s) \left(\sum_{k=0}^n f_i(t_k,S_{t_k})\delta_{t_k}(ds)+g_i(s,S_s)ds\right) + \sum_{\substack{i,j\in \mathcal{Z}\\ j\neq i}} h_{ij}(s,S_s)dN_{ij}(s),$$
where $\delta_{t_k}$ is the Dirac measure at $t_k$.

%$$C(t) = C(0)+\sum_{i\in \mathcal{Z}} \sum_{j=0}^n f_i(t_j,S_{t_j})\mathbb{I}_{\{t_j<t\}}+ \sum_{i\in \mathcal{Z}} \int_0^t g_i(s,S_s)ds + \sum_{(i,j)\in \mathcal{Z}_0} \int_0^t h_{ij}(s,S_s)dN_{ij}(s).$$

As a consequence of the assumptions imposed on $C$, the prospective value takes the following form,
\begin{align*}
\vec{V}(t) =& \, \frac{1}{v(t)}\sum_{i\in \mathcal{Z}} \sum_{k=0}^n v(t_k)I_i(t_k) f_i(t_k,S_{t_k})\mathbb{I}_{\{t < t_k\}}+\frac{1}{v(t)} \sum_{i\in \mathcal{Z}} \int_t^T v(s) I_i(s)g_i(s,S_s)ds \\
&+ \frac{1}{v(t)}\sum_{\substack{i,j\in \mathcal{Z}\\ j\neq i}} \int_t^T v(s) h_{ij}(s,S_s)dN_{ij}(s). 
\end{align*}

We can see the quantities $f_i(t_k,S_{t_k})$, $g_i(s,S_s)$ and $h_{ij}(s,S_s)$ as contingent claims on $S$ at their respective maturity times $t_k\in[0,T]$, $k=0,\dots,n$ and $s\in[0,T]$. By the arguments in \cite{Steffensen00}, the expected prospective value is given by
\begin{align}
    \vec{V}_i(t)=& \,\frac{1}{v(t)}\sum_{j\in\mathcal{Z}}\sum_{k=0}^nv(t_k)p_{ij}(t,t_k)\EE^\QQ[f_j(t_k,S_{t_k})|\mathcal{F}_t]\mathbb{I}_{\{t < t_k\}} \nonumber \\
    &+ \frac{1}{v(t)}\sum_{j\in\mathcal{Z}}\int_t^T v(s)p_{ij}(t,s)\EE^\QQ[g_j(s,S_s)|\mathcal{F}_t]ds \nonumber \\ 
    &+\frac{1}{v(t)}\sum_{\substack{j,k\in \mathcal{Z}\\ k\neq j}}\int_t^T v(s)p_{ij}(t,s)\mu_{jk}(s)\EE^\QQ[h_{jk}(s,S_s)|\mathcal{F}_t]ds. \label{eq:res1}
\end{align}

\begin{lemm}\label{lem:ito}
    Let $s\in[0,T]$ and $\varphi$ be a non-anticipative functional such that $\EE^\QQ\left[|\varphi(s,S_s)|\right]<\infty$. Then, there exists a non-anticipative functional $U^\varphi_s$ such that 
    \begin{align}\label{eq:funct}
        \frac{v(s)}{v(t)}\EE^\QQ\left[\varphi(s,S_s)|\mathcal{F}_t\right]=U^\varphi_s(t,S_t),\quad t\in[0,s].
    \end{align}
In addition, assume that $U^\varphi_s\in\mathbb{C}_b^{1,2}(\mathcal{W}_s)$ and $\mathcal{D}U^\varphi_s\in\mathbb{C}_l^{0,0}(\mathcal{W}_s)$. Then, $U^\varphi_s$  satisfies the following path dependent PDE
    \begin{align}\label{eq:pdu}
        \mathcal{D}{U}_s^\varphi(t,\omega_t)+\omega(t)r(t)\nabla_\omega U_s^\varphi(t,\omega_t)+\frac{1}{2}\sigma(t,\omega_t)^2\nabla_w^2U_s^\varphi(t,\omega_t)=r(t)U_s^\varphi(t,\omega_t)
    \end{align}
    on the topological support of $S$ in $(C^0([0,s],\RR), \|.\|_\infty)$ and final condition $U_s^\varphi(s,\omega_s)=\varphi(s,\omega_s)$.
\end{lemm}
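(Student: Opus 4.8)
The plan is to prove first the existence of the non-anticipative functional $U^\varphi_s$ and then the path dependent PDE, the latter by recognising a suitably discounted conditional expectation as a $\QQ$-martingale and invoking Lemma~\ref{lem:locmar}. For the existence, I would note that, $\varphi$ being a non-anticipative functional, $\varphi(s,S_s)$ is $\mathcal{F}_s^S$-measurable; since the equivalent martingale measure is obtained by a Girsanov transform driven by the market price of risk, $d\QQ/d\PP$ is $\mathcal{F}_T^S$-measurable, so $Z$ and $S$ remain independent under $\QQ$ and for $t\in[0,s]$ the conditioning on $\mathcal{F}_t=\mathcal{F}_t^Z\vee\mathcal{F}_t^S$ collapses to conditioning on $\mathcal{F}_t^S$, i.e. $\EE^\QQ[\varphi(s,S_s)|\mathcal{F}_t]=\EE^\QQ[\varphi(s,S_s)|\mathcal{F}_t^S]$. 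The process $t\mapsto\frac{v(s)}{v(t)}\EE^\QQ[\varphi(s,S_s)|\mathcal{F}_t^S]$, $t\in[0,s]$, is then $\mathcal{F}^S$-adapted, and since $S$ has continuous paths and generates $\mathcal{F}^S$, the functional representation of adapted processes discussed in Section~\ref{framework} (see also \cite{Cont16}) yields a non-anticipative functional $U^\varphi_s$ on $\Lambda_s$ satisfying \eqref{eq:funct}. Taking $t=s$ and using that $\varphi(s,S_s)$ is $\mathcal{F}_s$-measurable gives the final condition $U^\varphi_s(s,\omega_s)=\varphi(s,\omega_s)$.

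For the PDE, I would set $N(t)\df\EE^\QQ[\varphi(s,S_s)|\mathcal{F}_t]$, which by the tower property and the standing integrability $\EE^\QQ[|\varphi(s,S_s)|]<\infty$ is a uniformly integrable $\QQ$-martingale on $[0,s]$. With $G(t,\omega_t)\df v(t)U^\varphi_s(t,\omega_t)$, relation \eqref{eq:funct} reads $G(t,S_t)=v(s)N(t)$, so $G(t,S_t)$ is a $\QQ$-martingale. Next I would verify, using that $v(t)=e^{-\int_0^t r(u)du}$ is continuous, strictly positive and horizontally differentiable with $\mathcal{D}v(t)=-r(t)v(t)$, and that a vertical perturbation leaves the factor $v(t)$ untouched, that the hypotheses $U^\varphi_s\in\mathbb{C}_b^{1,2}(\mathcal{W}_s)$ and $\mathcal{D}U^\varphi_s\in\mathbb{C}_l^{0,0}(\mathcal{W}_s)$ propagate to $G\in\mathbb{C}_b^{1,2}(\mathcal{W}_s)$, $\mathcal{D}G\in\mathbb{C}_l^{0,0}(\mathcal{W}_s)$, with
\begin{align*}
\mathcal{D}G=v\,\mathcal{D}U^\varphi_s-r\,v\,U^\varphi_s,\qquad \nabla_\omega G=v\,\nabla_\omega U^\varphi_s,\qquad \nabla_\omega^2 G=v\,\nabla_\omega^2 U^\varphi_s.
\end{align*}

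Finally I would apply Lemma~\ref{lem:locmar} (with $s$ in place of $T$, under $\QQ$): $S$ solves $dS(t)=b^\QQ(t,S_t)dt+\sigma(t,S_t)dW^\QQ(t)$ with $b^\QQ(t,\omega_t)=\omega(t)r(t)$, and as $r$ is bounded on $[0,T]$ the pair $(b^\QQ,\sigma)$ satisfies \eqref{eq:pdpde1}--\eqref{eq:pdpde2}. Since $G(t,S_t)$ is a (local) $\QQ$-martingale, Lemma~\ref{lem:locmar} gives $\mathcal{D}G+b^\QQ\nabla_\omega G+\frac{1}{2}\sigma^2\nabla_\omega^2 G=0$ on the topological support of $S$ in $(C^0([0,s],\RR),\|.\|_\infty)$; inserting the three displayed identities and dividing by $v(t)>0$ produces exactly \eqref{eq:pdu}, which together with the final condition closes the argument.

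The step I expect to be the main obstacle is reconciling the càdlàg assumption on $r$ with the left-continuity needed to invoke Lemma~\ref{lem:locmar}: the functional $b^\QQ(t,\omega_t)=\omega(t)r(t)$ (and likewise $\mathcal{D}G$) lies in $\mathbb{C}_l^{0,0}$ only if $r$ is left-continuous, hence effectively only if $r$ is continuous, under which the argument above goes through verbatim. For general càdlàg $r$ one should instead argue directly: apply the functional Itô formula to $G$, use the martingale property to conclude that the Itô drift vanishes $dt\otimes d\QQ$-a.e., and upgrade this to the whole topological support via continuity of the drift in $\omega$ for fixed $t$ and right-continuity in $t$, the (countably many) jump times of $r$ being $dt$-negligible. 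A secondary but routine point is the measurable selection underlying the existence part, i.e. choosing a version of the conditional expectation that is a genuine non-anticipative functional; this is standard and carried out in \cite{Cont16}.
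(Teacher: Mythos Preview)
Your argument is correct and is essentially the paper's proof: the paper introduces $\widetilde{U}^\varphi_s(t,\omega_t)=\frac{v(t)}{v(s)}U^\varphi_s(t,\omega_t)$, observes it is a $\QQ$-martingale, applies Lemma~\ref{lem:locmar} to it, and then translates the resulting drift-zero equation back to $U^\varphi_s$ via the product rule; your $G=v(t)U^\varphi_s$ is just $v(s)\widetilde{U}^\varphi_s$, so the two computations differ only by the harmless constant factor $v(s)$.

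One remark: the concern you flag about $b^\QQ(t,\omega_t)=\omega(t)r(t)$ and $\mathcal{D}G$ lying in $\mathbb{C}_l^{0,0}$ when $r$ is merely c\`adl\`ag is a genuine technical point, and the paper does not address it either---it simply asserts that the c\`adl\`ag property of $r$ suffices to transfer the regularity from $U^\varphi_s$ to $\widetilde{U}^\varphi_s$. So your proposed workaround (apply the functional It\^o formula directly and argue that the drift vanishes off a $dt$-null set, then pass to the support by continuity in $\omega$) is a reasonable way to make this step rigorous, and in fact more careful than the paper's own treatment.
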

\begin{proof}
First, observe that by independence of $Z$ and $S$ the following holds $\EE^\QQ\left[\varphi(s,S_s)|\mathcal{F}_t\right]=\EE^\QQ\left[\varphi(s,S_s)|\mathcal{F}_t^S\right]$. Then, there exists a non-anticipative functional $\widetilde{U}^\varphi_s$ such that $\EE^\QQ[\varphi(s,S_s)|\mathcal{F}_t]=\widetilde{U}^\varphi_s(t,S_t)$ for $t\in[0,s]$. One can define $U^\varphi_s(t,S_t)=\frac{v(s)}{v(t)}\widetilde{U}^\varphi_s(t,S_t)$ and \eqref{eq:funct} holds for $t\in[0,s]$.

Assume that $U^\varphi_s\in\mathbb{C}_b^{1,2}(\mathcal{W}_s)$ and $\mathcal{D}U^\varphi_s\in\mathbb{C}_l^{0,0}(\mathcal{W}_s)$. Since $r$ is assumed to be càdlàg, then $\widetilde{U}^\varphi_s\in\mathbb{C}_b^{1,2}(\mathcal{W}_s)$ and $\mathcal{D}\widetilde{U}^\varphi_s\in\mathbb{C}_l^{0,0}(\mathcal{W}_s)$. Since $t\to\widetilde{U}^\varphi_s(t,S_t)$ is a $\QQ$-martingale for $t\in[0,s]$, by Lemma \ref{lem:locmar} we have that
\begin{align}\label{eq:pdtildeu}
    \mathcal{D}\widetilde{U}^\varphi_s(t,\omega_t)+\omega(t)r(t)\nabla_\omega\widetilde{U}^\varphi_s(t,\omega_t)+\frac{1}{2}\sigma(t,\omega_t)^2\nabla^2_\omega\widetilde{U}^\varphi_s(t,\omega_t)=0. 
\end{align}
on the topological support of $S$ in $(C^0([0,s],\RR), \|.\|_\infty)$. One can check that
\begin{align*}
    \mathcal{D}U_s^\varphi(t,\omega_t) &=\frac{v(s)}{v(t)}\mathcal{D}\widetilde{U}^\varphi_s(t,\omega_t)+r(t)\frac{v(s)}{v(t)}\widetilde{U}^\varphi_s(t,\omega_t), \\
    \nabla_\omega U^\varphi_s(t,\omega_t) &= \frac{v(s)}{v(t)}\nabla_\omega\widetilde{U}^\varphi_s(t,\omega_t), \\
    \nabla_\omega^2U^\varphi_s(t,\omega_t) &=\frac{v(s)}{v(t)}\nabla_\omega^2\widetilde{U}^\varphi_s(t,\omega_t).
\end{align*}
Replacing the previous equalities in \eqref{eq:pdtildeu} we obtain the path dependent PDE \eqref{eq:pdu}.
\end{proof}

Observe that the expression of the expected prospective reserve given in \eqref{eq:res1} can be written in a more compact way using the functionals introduced in Lemma \ref{lem:ito}:
\begin{align}\label{eq:res2}
    \vec{V}_i(t)=& \,\sum_{j\in\mathcal{Z}}\sum_{k=0}^np_{ij}(t,t_k)U_{t_k}^{f_j}(t,S_t)\mathbb{I}_{\{t < t_k\}} +\sum_{j\in\mathcal{Z}}\int_t^T p_{ij}(t,s)U_s^{g_j}(t,S_t)ds  \notag\\ 
    &+\sum_{\substack{j,k\in \mathcal{Z}\\ k\neq j}}\int_t^T p_{ij}(t,s)\mu_{jk}(s)U_s^{h_{jk}}(t,S_t)ds.
\end{align}
Note that in the previous expression, $\vec{V}_i(t)$ depends on $S_t$, not just on $S(t)$.

\begin{thm}[A path dependent Thiele's partial differential equation]\label{thm:thiele} Let $i\in\mathcal{Z}$. There exists a non-anticipative functional $V_i$ such that $\vec{V}_i(t)=V_i(t,S_t)$ for $t\in[0,T]$. In addition, assume that for all $t\in[0,T]$, 
\begin{enumerate}
    \item $U_{t_l}^{f_j},U_s^{g_j},U_s^{h_{jk}}\in\mathbb{C}_b^{1,2}(\mathcal{W}_T)$ for $j,k\in\mathcal{Z}, j\neq k$, $l=0,...,n$ and all $s\in[0,T]$.
        \item $\mathcal{D}U_{t_l}^{f_j},\mathcal{D}U_s^{g_j},\mathcal{D}U_s^{h_{jk}}\in\mathbb{C}_l^{0,0}(\mathcal{W}_T)$ for $j,k\in\mathcal{Z}, j\neq k$, $l=0,...,n$ and all $s\in[0,T]$.
    \item There exists  non-anticipative functional $H_{s}$ such that 
    \begin{align*}
        \max\{|U_s^{g_j}(u,\omega_t)|,|U_s^{h_{jk}}(u,\omega_t)|,|\mathcal{D}U_s^{g_j}(u,\omega_t)|,|\mathcal{D}U_s^{h_{jk}}(u,\omega_t)|\}\leq H_{s}(t,\omega_t)
    \end{align*}
     on the topological support of $S$ in $(C^0([0,s],\RR), \|.\|_\infty)$, for $j,k\in\mathcal{Z}, j\neq k$, $u,s\in[t,T]$, $u\leq s$ and $\int_t^TH_{s}(t,\omega_t)ds<\infty$.
    \item There exists a non-anticipative functional $J_{s}$ such that
\begin{align*}
     \max_{l=1,2}\{|\nabla_\omega^l U_s^{g_j}(t,\omega_t+u\mathbb{I}_{[t,T]})|,|\nabla_\omega^l U_s^{h_{jk}}(t,\omega_t+u\mathbb{I}_{[t,T]})|\}\leq J_{s}(t,\omega_t)
\end{align*}
    on the topological support of $S$ in $(C^0([0,s],\RR), \|.\|_\infty)$, for $j,k\in\mathcal{Z}, j\neq k$, $u$ in a neighborhood of $0$, $s\in[t,T]$ and $\int_t^TJ_{s}(t,\omega_t)ds<\infty$.
\end{enumerate}
Then, $V_i$ is a solution of the following path dependent PDE
\begin{align*}
\begin{split}
    \mathcal{D}V_i(t,\omega_t)& \,= r(t)V_i(t,\omega_t)-g_i(t,\omega_t) \\
    &-\sum_{j\neq i}\mu_{ij}(t)\left(h_{ij}(t,\omega_t)+V_j(t,\omega_t)-V_i(t,\omega_t)\right)-\mathcal{L}_\omega V_i(t,\omega_t),
    \end{split}
\end{align*}
on the topological support of $S$ in $(C^0([0,T],\RR), \|.\|_\infty)$, with final condition $V_i(T,\omega_T)=0$ and $\mathcal{L}_\omega$ is the path dependent differential operator defined as
\begin{align*}
    \mathcal{L}_\omega F(t,\omega_t)&=\omega(t)r(t)\nabla_\omega F(t,\omega_t) +\frac{1}{2}\sigma(t,\omega_t)^2\nabla_\omega^2F(t,\omega_t) \\
    &=\omega(t)r(t)\nabla_\omega F(t,\omega_t) +\frac{1}{2}\omega(t)^2\widetilde{\sigma}(t,\omega_t)^2\nabla_\omega^2F(t,\omega_t) 
\end{align*}
    for $F\in\mathbb{C}_b^{1,2}(\mathcal{W}_T)$.
\end{thm}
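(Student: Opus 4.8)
The plan is to start from the representation \eqref{eq:res2} of the prospective reserve and differentiate it horizontally, term by term, using Lemma \ref{lem:ito} to handle each building block $U_s^{\varphi}$. First I would justify the existence of the non-anticipative functional $V_i$ with $\vec{V}_i(t)=V_i(t,S_t)$: this follows because each $U_s^{f_j},U_s^{g_j},U_s^{h_{jk}}$ is a non-anticipative functional of $S_t$ (by Lemma \ref{lem:ito}) and the transition probabilities $p_{ij}(t,\cdot)$ and rates $\mu_{jk}$ are deterministic, so the right-hand side of \eqref{eq:res2} is a deterministic functional of $(t,S_t)$; the assumed regularity in hypotheses (1)--(2) plus the domination hypotheses (3)--(4) put $V_i$ in $\mathbb{C}_b^{1,2}(\mathcal{W}_T)$. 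The final condition $V_i(T,\omega_T)=0$ is immediate from \eqref{eq:res2} since every term carries either $\mathbb{I}_{\{t<t_k\}}$ or an integral $\int_t^T$, all of which vanish at $t=T$.

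The core computation is to apply $\mathcal{D}$ to the three groups of terms in \eqref{eq:res2}. For the jump terms $\sum_{j,k} p_{ij}(t,t_k)U_{t_k}^{f_j}(t,S_t)\mathbb{I}_{\{t<t_k\}}$, the horizontal derivative hits both $t\mapsto p_{ij}(t,t_k)$ (producing, via Kolmogorov's forward/backward relations, the terms $\sum_{j}\big(\sum_{k\neq j}\mu_{jk}p_{ik}-\mu_j p_{ij}\big)$ type contributions) and $t\mapsto U_{t_k}^{f_j}(t,S_t)$, for which Lemma \ref{lem:ito} gives $\mathcal{D}U_{t_k}^{f_j}=r U_{t_k}^{f_j}-\mathcal{L}_\omega U_{t_k}^{f_j}$ on $\mathrm{supp}(S)$. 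For the Lebesgue-integral terms $\int_t^T p_{ij}(t,s)U_s^{g_j}(t,S_t)\,ds$, differentiating in $t$ produces a boundary term $-p_{ij}(t,t)U_t^{g_j}(t,S_t)=-\delta_{ij}g_i(t,\omega_t)$ (using $p_{ij}(t,t)=\delta_{ij}$ and the final condition $U_t^{g_j}(t,\omega_t)=g_j(t,\omega_t)$ from Lemma \ref{lem:ito}) plus the integrated interior derivative, and similarly for the $\mu_{jk}$-weighted terms, which yields the $-\sum_{j\neq i}\mu_{ij}(t)h_{ij}(t,\omega_t)$ term. Hypotheses (3) and (4) are exactly what is needed to differentiate under the integral sign and to pass $\mathcal{D}$, $\nabla_\omega$, $\nabla_\omega^2$ through $\int_t^T ds$.

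Collecting all contributions, the Kolmogorov relations for $p_{ij}(t,s)$ reorganize the surviving sums into the Thiele structure: the terms that are not boundary terms and not of the $U_s^{\varphi}$-PDE type regroup (after relabeling $j\leftrightarrow k$ and using $\sum_{j}\mu_{ij}=\mu_i$) into $\sum_{j\neq i}\mu_{ij}(t)\big(V_j(t,\omega_t)-V_i(t,\omega_t)\big)$, while the $\mathcal{L}_\omega U_s^{\varphi}$ pieces sum, by linearity of $\mathcal{L}_\omega$ and of \eqref{eq:res2}, to exactly $\mathcal{L}_\omega V_i(t,\omega_t)$; the leftover $r(t)U_s^{\varphi}$ pieces sum to $r(t)V_i(t,\omega_t)$. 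Rearranging gives the stated equation. I expect the main obstacle to be the bookkeeping in the jump-term differentiation: one must carefully combine the $t$-derivative of $p_{ij}(t,t_k)$ with Kolmogorov's equations and check that the contributions from the jump times $t_k$ assemble correctly with those from the continuous-payment integrals, and — more delicately — justify that differentiating under the $ds$-integral and interchanging $\mathcal{D}$ with the integral is legitimate, which is precisely where the domination functionals $H_s$ and $J_s$ of hypotheses (3)--(4) and their integrability $\int_t^T H_s\,ds<\infty$, $\int_t^T J_s\,ds<\infty$ enter; everything else is linear algebra on the finite state space $\mathcal{Z}$ together with repeated invocation of Lemma \ref{lem:ito}.
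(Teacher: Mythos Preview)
Your proposal is correct and follows essentially the same route as the paper: start from the representation \eqref{eq:res2}, split into the jump-term sum and the Lebesgue-integral terms, apply the horizontal derivative using the product rule together with Kolmogorov's \emph{backward} equation $\partial_t p_{ij}(t,s)=\sum_{k\neq i}\mu_{ik}(t)\bigl(p_{ij}(t,s)-p_{kj}(t,s)\bigr)$ and Lemma \ref{lem:ito} for each $U_s^{\varphi}$, invoke hypotheses (3)--(4) to differentiate under the $ds$-integral, and then reassemble into the Thiele form. One caution: the expression you sketch for the $t$-derivative of $p_{ij}(t,\cdot)$ looks like the forward equation (indices $\mu_{jk},p_{ik}$); since you are differentiating in the \emph{first} argument you must use the backward equation indexed by $i$ and $\mu_{ik}$, which is exactly what makes the sums regroup into $\sum_{k\neq i}\mu_{ik}(t)\bigl(V_i-V_k\bigr)$.
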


\begin{proof}
    First of all and to lighten the proof, we remark that all the equalities in the proof involving non-anticipative functionals hold on the topological support of $S$ in $(C^0([0,T],\RR),\|.\|_\infty)$. Note that the functional $V_i$ is already given in \eqref{eq:res2}, that is, 
    \begin{align*}
        V_i(t,\omega_t)=& \,\sum_{j\in\mathcal{Z}}\sum_{k=0}^np_{ij}(t,t_k)U_{t_k}^{f_j}(t,\omega_t)\mathbb{I}_{\{t < t_k\}} +\sum_{j\in\mathcal{Z}}\int_t^T p_{ij}(t,s)U_s^{g_j}(t,\omega_t)ds  \notag\\ 
    &+\sum_{\substack{j,k\in \mathcal{Z}\\ k\neq j}}\int_t^T p_{ij}(t,s)\mu_{jk}(s)U_s^{h_{jk}}(t,\omega_t)ds.
    \end{align*}
   Assume now that $U_{t_l}^{f_j},U_s^{g_j},U_s^{h_{jk}}\in\mathbb{C}_b^{1,2}(\mathcal{W}_T)$ for $j,k\in\mathcal{Z}, j\neq k$, $l=0,...,n$ and all $s\in[0,T]$. Furthermore, assume that $\mathcal{D}U_{t_l}^{f_j},\mathcal{D}U_s^{g_j},\mathcal{D}U_s^{h_{jk}}\in\mathbb{C}_l^{0,0}(\mathcal{W}_T)$ for $j,k\in\mathcal{Z}, j\neq k$, $l=0,...,n$ and all $s\in[0,T]$. We can write
    \begin{align*}
        V_i(t,\omega_t)=G_i(t,\omega_t)+\int_t^TF_{i,s}(t,\omega_t)ds,
    \end{align*}
    where
    \begin{align}
        G_i(t,\omega_t) &=\sum_{j\in\mathcal{Z}}\sum_{k=0}^np_{ij}(t,t_k)U_{t_k}^{f_j}(t,\omega_t)\mathbb{I}_{\{t < t_k\}},  \nonumber \\
        F_{i,s}(t,\omega_t) &=\sum_{j\in\mathcal{Z}}p_{ij}(t,s)\left(U_s^{g_j}(t,\omega_t)+\sum_{k\neq i}\mu_{jk}(s)U_s^{h_{jk}}(t,\omega_t)\right) \nonumber \\
        &=\sum_{j\in\mathcal{Z}}p_{ij}(t,s)U_s^{\psi_j}(t,\omega_t), \label{eq:defF}
    \end{align}
    where $\psi_j(s,\omega_s)=g_j(s,\omega_s)+\sum_{k\neq j}\mu_{jk}(s)h_{jk}(s,\omega_s)$. Since $Z$ is a regular Markov process and by the regularity assumptions on the functionals $U$, we have that $U_s^{\psi_j}, F_{i,s}\in\mathbb{C}^{1,2}_b(\mathcal{W}_T)$ for all $s\in[0,T]$, $\mathcal{D}U_s^{\psi_j}\in\mathbb{C}_l^{0,0}(\mathcal{W}_T)$ and $G_i\in\mathbb{C}_{b}^{1,2}(\mathcal{W}_T)$.  Recall Kolmogorov's backward equation
\begin{align*}
    \partial_t p_{ij}(t,s)=\sum_{k\neq i}\mu_{ik}(t)\left(p_{ij}(t,s)-p_{kj}(t,s)\right). 
\end{align*}
Then, applying the product rule and Kolmogorov's backward equation
\begin{align}
    \mathcal{D}F_{i,s}(t,\omega_t)& \,=\sum_{j\in\mathcal{Z}}\partial_tp_{ij}(t,s)U_{s}^{\psi_j}(t,\omega_t)+\sum_{j\in\mathcal{Z}}p_{ij}(t,s)\mathcal{D}U_s^{\psi_j}(t,\omega_t) \nonumber \\ 
    & \,=\sum_{j\in\mathcal{Z}}\sum_{k\neq i}\mu_{ik}(t)\left(p_{ij}(t,s)-p_{kj}(t,s)\right)U_s^{\psi_j}(t,\omega_t)\nonumber \\ 
    &+\sum_{j\in\mathcal{Z}}p_{ij}(t,s)\mathcal{D}U_s^{\psi_j}(t,\omega_t)\nonumber \\
    & \,=\sum_{k\neq i}\mu_{ik}(t)\left(F_{i,s}(t,\omega_t)-F_{k,s}(t,\omega_t)\right)\nonumber \\ 
    &+\sum_{j\in\mathcal{Z}}p_{ij}(t,s)\mathcal{D}U_s^{\psi_j}(t,\omega_t). \label{eq:DFexpr}
\end{align}
Moreover, by linearity of $\mathcal{L}_\omega$,
\begin{align}\label{eq:DVFexpr}
    \mathcal{L}_\omega F_{i,s}(t,\omega_t)=\sum_{j\in\mathcal{Z}}p_{ij}(t,s)\mathcal{L}_\omega U_s^{\psi_j}(t,\omega_t).
\end{align}
Note that by Lemma \ref{lem:ito} the following is satisfied
\begin{align*}
    \mathcal{D}U_s^{\psi_j}(t,\omega_t)+\mathcal{L}_\omega U_s^{\psi_j}(t,\omega_t)=r(t)U_s^{\psi_j}(t,\omega_t).
\end{align*}
Therefore, 
\begin{align}\label{eq:F1}
    \mathcal{D}F_{i,s}(t,\omega_t)+\mathcal{L}_\omega F_{i,s}(t,\omega_t) = \sum_{k\neq i}\mu_{ik}(t)\left(F_{i,s}(t,\omega_t)-F_{k,s}(t,\omega_t)\right)+r(t)F_{i,s}(t,\omega_t).
\end{align}
Recall that 
 \begin{align*}
        V_i(t,\omega_t)=G_i(t,\omega_t)+\int_t^TF_{i,s}(t,\omega_t)ds.
    \end{align*}
Now, 
\begin{align*}
    \mathcal{D}V_i(t,\omega_t)=\mathcal{D}G_i(t,\omega_t)+\mathcal{D}\int_t^TF_{i,s}(t,\omega_t)ds.
\end{align*}
We now justify that 
\begin{align*}
    \mathcal{D}\int_t^TF_{i,s}(t,\omega_t)ds=\int_t^T\mathcal{D}F_{i,s}(t,\omega_t)ds-F_{i,t}(t,\omega_t).
\end{align*}
Define the functional $K_{i}(u,\omega_t):=\int_u^TF_{i,s}(u,\omega_t)ds$ for $u\in[t,T]$. Then, 
\begin{align*}
      \mathcal{D}\int_t^TF_{i,s}(t,\omega_t)ds= \lim_{\substack{h\to 0 \\ h>0}} \frac{1}{h}\left(K_{i}(t+h,\omega_t)-K_{i}(t,\omega_t)\right).  
\end{align*}
One can check that $\partial_uF_{i,s}(u,\omega_t)=\mathcal{D}F_{i,s}(u,\omega_t)$. Indeed, 
\begin{align*}
    \partial_u F_{i,s}(u,\omega_t)&=\lim_{h\to 0}\frac{1}{h}\left(F_{i,s}(u+h,\omega_t)-F_{i,s}(u,\omega_t)\right) \\
    & = \lim_{h\to 0}\frac{1}{h}\left(F_{i,s}(u+h,(\omega_t)_u)-F_{i,s}(u,(\omega_t)_u)\right) \\
    &= \mathcal{D}F_{i,s}(u,(\omega_t)_u) =  \mathcal{D}F_{i,s}(u,\omega_t).
\end{align*}
Then, by the assumptions in (2), the expression of $F_{i,s}$ in \eqref{eq:defF}, the one of $\mathcal{D}F_{i,s}$ in \eqref{eq:DFexpr} and using that $Z$ is a regular Markov process, we have that $|\partial_uF_{i,s}(u,\omega_t)|=|\mathcal{D}F_{i,s}(u,\omega_t)|\leq L_{s}(t,\omega_t)$ for $u,s\in[t,T]$, $u\leq s$ where $L_{s}$ is satisfies $\int_t^TL_{s}(t,\omega_t)ds<\infty$. Therefore, we can differentiate under the integral sign to get
\begin{align*}
      \mathcal{D}\int_t^TF_{i,s}(t,\omega_t)ds=\int_t^T\mathcal{D}F_{i,s}(t,\omega_t)ds-F_{i,t}(t,\omega_t).
\end{align*}
Therefore, 
\begin{align*}
\mathcal{D}V_i(t,\omega_t)=\mathcal{D}G_i(t,\omega_t)+\int_t^T\mathcal{D}F_{i,s}(t,\omega_t)-F_{i,t}(t,\omega_t).
\end{align*}
Observe, 
\begin{align*}
    F_{i,t}(t,\omega_t)=\sum_{j\in\mathcal{Z}}p_{ij}(t,t)U_t^{\psi_j}(t,\omega_t)=U_t^{\psi_i}(t,\omega_t)=g_i(t,\omega_t)+\sum_{k\neq i}\mu_{ik}(t)h_{ik}(t,\omega_t).
\end{align*}
Altogether
\begin{align}\label{eq:intDF}
    \int_t^T\mathcal{D}F_{i,s}(t,\omega_t)ds = \mathcal{D}V_i(t,\omega_t)-\mathcal{D}G_i(t,\omega_t)+g_i(t,\omega_t)+\sum_{k\neq i}\mu_{ik}(t)h_{ik}(t,\omega_t).
\end{align}
Integrating \eqref{eq:F1} with respect to $s$ on the region $[t,T]$
\begin{align}
    \int_t^T\mathcal{D}F_{i,s}(t,\omega_t)ds+\int_t^T\mathcal{L}_\omega F_{i,s}(t,\omega_t)ds& = \nonumber \\ \sum_{k\neq i}\mu_{ik}(t)\int_t^T\left(F_{i,s}(t,\omega_t)-F_{k,s}(t,\omega_t)\right)ds+r(t)\int_t^TF_{i,s}(t,\omega_t)ds. \label{eq:intDF2}
\end{align}
We know justify that the following holds $\int_t^T\mathcal{L}_\omega F_{i,s}(t,\omega_t)ds=\mathcal{L}_\omega\int_t^T F_{i,s}(t,\omega_t)ds$. It is enough to prove that
\begin{align*}
    \int_t^T\nabla^l_\omega F_{i,s}(t,\omega_t)ds = \nabla^l_\omega\int_t^TF_{i,s}(t,\omega_t)ds
\end{align*}
for $l=1,2$.
We first prove that $\nabla_\omega\int_t^TF_{i,s}(t,\omega_t)ds=\int_t^T\nabla_\omega F_{i,s}(t,\omega_t)ds$. Define the functional $M_{i,t}(u,\omega_t)=\int_t^TF_{i,s}(t,\omega_t+u\mathbb{I}_{[t,T]})ds$ for $u$ in a neighborhood of $0$. Then, 
\begin{align*}
    \nabla_\omega\int_t^TF_{i,s}(t,\omega_t)ds = \lim_{h\to0}\frac{1}{h}\left(M_{i,t}(h,\omega_t)-M_{i,t}(0,\omega_t)\right).
\end{align*}
By the assumptions in (3), the expression of $\nabla_\omega F_{i,s}(t,\omega_t)$ in \eqref{eq:DVFexpr} and using that $Z$ is a regular Markov process, one can check that $|\partial_uF_{i,s}(t,\omega_t+u\mathbb{I}_{[t,T]})|=|\nabla_\omega F_{i,s}(t,\omega_t+u\mathbb{I}_{[t,T]})|\leq N_{s}(t,\omega_t)$ for $u$ in a neighborhood of $0$ and $s\in[t,T]$ where $N_{s}$ satisfies $\int_t^TN_{s}(t,\omega_t)ds<\infty$. Therefore, we can differentiate under the integral sign to get 
\begin{align*}
     \nabla_\omega\int_t^TF_{i,s}(t,\omega_t)ds = \int_t^T\nabla_\omega F_{i,s}(t,\omega_t)ds.
\end{align*}
Similarly, one can prove that $\nabla^2_\omega\int_t^TF_{i,s}(t,\omega_t)ds=\int_t^T\nabla^2_\omega F_{i,s}(t,\omega_t)ds$. Replacing \eqref{eq:intDF} in \eqref{eq:intDF2} and interchanging the operator $\mathcal{L}_\omega$ with the integral we get
\begin{align*}
    \mathcal{D}V_i(t,\omega_t)-\mathcal{D}G_i(t,\omega_t)+g_i(t,\omega_t)+\sum_{k\neq i}\mu_{ik}(t)h_{ik}(t,\omega_t)+\mathcal{L}_\omega\left(V_{i,s}(t,\omega_t)-G_{i,s}(t,\omega_t)\right)& = \\
    \sum_{k\neq i}\mu_{ik}(t)\left(V_i(t,\omega_t)-V_k(t,\omega_t)+G_k(t,\omega_t)-G_i(t,\omega_t)\right)+r(t)\left(V_i(t,\omega_t)-G_i(t,\omega_t)\right).
\end{align*}
One can check that the terms involving $G$ in the previous equation cancel out. Indeed, 
\begin{align*}
    \mathcal{D}G_i(t,\omega_t)& \,=\sum_{j\in\mathcal{Z}}\sum_{k=0}^n\partial_tp_{ij}(t,t_k)U_{t_k}^{f_j}(t,\omega_t)\mathbb{I}_{\{t < t_k\}}+\sum_{j\in\mathcal{Z}}\sum_{k=0}^np_{ij}(t,t_k)\mathcal{D}U_{t_k}^{f_j}(t,\omega_t)\mathbb{I}_{\{t < t_k\}} \\
    &=\sum_{j\in\mathcal{Z}}\sum_{k=0}^n\sum_{l\neq i}\mu_{il}(t)\left(p_{ij}(t,t_k)-p_{lj}(t,t_k)\right)U_{t_k}^{f_j}(t,\omega_t)\mathbb{I}_{\{t < t_k\}} \\
    & \,+\sum_{j\in\mathcal{Z}}\sum_{k=0}^np_{ij}(t,t_k)\mathcal{D}U_{t_k}^{f_j}(t,\omega_t)\mathbb{I}_{\{t < t_k\}} \\
    &=\sum_{l\neq i}\mu_{il}(t)\sum_{j\in\mathcal{Z}}\sum_{k=0}^n\left(p_{ij}(t,t_k)-p_{lj}(t,t_k)\right)U_{t_k}^{f_j}(t,\omega_t)\mathbb{I}_{\{t < t_k\}} \\
    & \,+\sum_{j\in\mathcal{Z}}\sum_{k=0}^np_{ij}(t,t_k)\mathcal{D}U_{t_k}^{f_j}(t,\omega_t)\mathbb{I}_{\{t < t_k\}} \\
    &=\sum_{l\neq i}\mu_{il}(t)\left(G_i(t,\omega_t)-G_l(t,\omega_t)\right)+\sum_{j\in\mathcal{Z}}\sum_{k=0}^np_{ij}(t,t_k)\mathcal{D}U_{t_k}^{f_j}(t,\omega_t)\mathbb{I}_{\{t < t_k\}}
\end{align*}
Therefore,
\begin{align*}
    \mathcal{D}G_i(t,\omega_t)+\mathcal{L}_\omega G_i(t,\omega_t) & \,= \sum_{l\neq i}\mu_{il}(t)\left(G_i(t,\omega_t)-G_l(t,\omega_t)\right) \\
    & \,+\sum_{j\in\mathcal{Z}}\sum_{k=0}^np_{ij}(t,t_k)\left(\mathcal{D}U_{t_k}^{f_j}(t,\omega_t)+\mathcal{L}_\omega U_{t_k}^{f_j}(t,\omega_t)\right)\mathbb{I}_{\{t < t_k\}} \\
    &= \sum_{l\neq i}\mu_{il}(t)\left(G_i(t,\omega_t)-G_l(t,\omega_t)\right) \\ 
    & \,+\sum_{j\in\mathcal{Z}}\sum_{k=0}^np_{ij}(t,t_k)r(t)U_{t_k}^{f_j}(t,\omega_t)\mathbb{I}_{\{t < t_k\}} \\ 
        &= \sum_{l\neq i}\mu_{il}(t)\left(G_i(t,\omega_t)-G_l(t,\omega_t)\right) + r(t)G_i(t,\omega_t).
\end{align*}
where we have used that $\mathcal{D}U_{t_k}^{f_j}(t,\omega_t)+\mathcal{L}_\omega U_{t_k}^{f_j}(t,\omega_t)=r(t)U_{t_k}^{f_j}(t,\omega_t)$. Finally, we get
\begin{align*}
    \mathcal{D}V_i(t,\omega_t)& \,= r(t)V_i(t,\omega_t)-g_i(t,\omega_t) \\
    &-\sum_{j\neq i}\mu_{ij}(t)\left(h_{ij}(t,\omega_t)+V_j(t,\omega_t)-V_i(t,\omega_t)\right)-\mathcal{L}_\omega V_i(t,\omega_t).
\end{align*}
\end{proof}

We give a simple payoff example and check that all the assumptions in  Theorem \ref{thm:thiele} are satisfied.

\begin{exam} We consider the payoff 
\begin{align*}
    \varphi(s,S_s)=\frac{1}{s}\int_0^sS(v)dv,
\end{align*}
where $S$ follows the Black Scholes model 
\begin{align*}
    dS(t)=\mu S(t)dt+\sigma S(t)dW(t).
\end{align*}
For the sake of simplicity, we assume a constant interest rate. Then, $S$ has dynamics 
\begin{align*}
    dS(t)=rS(t)dt+\sigma S(t)dW^\QQ(t),
\end{align*}
under the unique equivalent martingale measure $\QQ$ and $W^\QQ$ is a $\QQ$-Brownian motion. 

One can check that the functional $U_s^\varphi$ defined in Lemma \ref{lem:ito} by
\begin{align*}
    U_s^\varphi(t,\omega_t)=\frac{v(s)}{v(t)}\EE^\QQ[\varphi(s,S_s)|\mathcal{F}_t]=e^{-r(s-t)}\EE^\QQ\left[\frac{1}{s}\int_0^sS(v)dv|\mathcal{F}_t\right],
\end{align*}
is given by
\begin{align*}
     U_s^\varphi(t,\omega_t)=\frac{e^{-r(s-t)}}{s}\int_0^t\omega(v)dv+\frac{\omega(t)}{rs}\left(1-e^{-r(s-t)}\right).
\end{align*}
Then, one can verify that $U_s^\varphi\in\mathbb{C}_b^{1,2}(\mathcal{W}_s)$ and $\mathcal{D}U_s^\varphi\in\mathbb{C}_l^{0,0}(\mathcal{W}_s)$ where the horizontal and vertical derivatives are given by
\begin{align*}
    \mathcal{D}U_s^\varphi(t,\omega_t)&=\frac{re^{-r(s-t)}}{s}\int_0^t\omega(v)dv, \\
    \nabla U_s^\varphi(t,\omega_t)&=\frac{1}{rs}\left(1-e^{-r(s-t)}\right), \\
     \nabla^2 U_s^\varphi(t,\omega_t)&=0.
\end{align*}
In order to check assumption (3) in Theorem \ref{thm:thiele}, one can see that for $t\leq u\leq s$, 
\begin{align*}
    |U_s^\varphi(u,\omega_t)|=\left|\frac{e^{-r(s-u)}}{s}\int_0^u\omega_t(v)dv+\frac{\omega(t)}{rs}\left(1-e^{-r(s-u)}\right)\right| \\
    \leq \frac{1}{s}\int_0^s|\omega_t(v)|dv+\frac{|\omega(t)|}{rs}\left(1-e^{-rs}\right) \\
    \leq \sup_{v\in[0,t]}|\omega(v)|\left(1+\frac{1}{rs}\left(1-e^{-rs}\right)\right),
\end{align*}
where $\int_0^T\left(1+\frac{1}{rs}\left(1-e^{-rs}\right)\right)ds<\infty$. One can similarly proceed with the horizontal derivative, the first derivative and the second vertical derivative.

\end{exam}
%%%%%%%%%%%%%%%%%%%%%%%%%%%%%%%%%%%%%%%%%%%%%%%%%%
%%%%%%%%%%%%%%%%%%%%%%%%%%%%%%%%%%%%%%%%%%%%%%%%%%
%%%%%%%%%%%%%%%%%%%%%%%%%%%%%%%%%%%%%%%%%%%%%%%%%%
%%%%%%%%%%%%%%%%%%%%%%%%%%%%%%%%%%%%%%%%%%%%%%%%%%

\end{document}